\documentclass{article}
\usepackage{arxiv}
\usepackage{xparse}
\usepackage[pdftex]{graphicx}
\usepackage{bm}
\usepackage{amsmath,amsthm,amsfonts}
\usepackage{algorithmic}
\usepackage{tabularx}
\usepackage[caption=false,font=normalsize,labelfont=sf,textfont=sf]{subfig}

\hyphenation{op-tical net-works semi-conduc-tor}

\NewDocumentCommand\norm{mg}
{\|{#1}\|\IfNoValueTF{#2}{}{_{#2}}}

\DeclareMathOperator{\tr}{tr}
\DeclareMathOperator{\qr}{qr}

\newcommand\bSigma{\bm{\Sigma}}
\newcommand\bx{\mathbf{x}}
\newcommand\by{\mathbf{y}}
\newcommand\bv{\mathbf{v}}
\newcommand\bK{\mathbf{K}}
\newcommand\bJ{\mathbf{J}}
\newcommand\bM{\mathbf{M}}
\newtheorem{clm}{Claim}
\newtheorem{lem}{Lemma}
\begin{document}
\title{The Level Set Kalman Filter for State Estimation of Continuous-discrete Systems}
\author{Ningyuan~Wang
        and~Daniel~B~Forger
\thanks{N. Wang is with the Department
of Mathematics, University of Michigan, 
Ann Arbor, MI, 48109}
\thanks{D. B. Forger is with Department of Mathematics, Department of Computational Medicine and Bioinformatics and Michigan Institute for Data Science, University of Michigan,
Ann Arbor, MI, 48109
 e-mail: forger@umich.edu
}
}

\markboth{ }%
{Wang \MakeLowercase{\textit{et al.}}: LSKF for Continuous-discrete Systems}

\maketitle

\begin{abstract}
We propose a new extension of Kalman filtering for continuous-discrete systems with nonlinear state-space models that we name as the level set Kalman filter (LSKF). The LSKF assumes the probability distribution can be approximated as a Gaussian and updates the Gaussian distribution through a time-update step and a measurement-update step. The LSKF improves the time-update step compared to existing methods, such as the continuous-discrete cubature Kalman filter (CD-CKF), by reformulating the underlying Fokker-Planck equation as an ordinary differential equation for the Gaussian, thereby avoiding the need for the explicit expression of the higher derivatives. Together with a carefully picked measurement-update method, numerical experiments show that the LSKF has a consistent performance improvement over the CD-CKF for a range of parameters. Meanwhile, the LSKF simplifies implementation, as no user-defined timestep subdivisions between measurements are required, and the spatial derivatives of the drift function are not explicitly needed.
\end{abstract}

\begin{keywords} {}
 Bayesian filter, Kalman-filter, level set, nonlinear filter
\end{keywords}

\section*{Copyright info}
IEEE Trans. Signal Process. \emph{Early Access} (2021) 10.1109/TSP.2021.3133698

\copyright 2021 IEEE. Personal use of this material is permitted.  Permission from IEEE must be obtained for all other uses, in any current or future media, including reprinting/republishing this material for advertising or promotional purposes, creating new collective works, for resale or redistribution to servers or lists, or reuse of any copyrighted component of this work in other works.
\section{Introduction}
Kalman Filtering methods are used in many applications. A Bayesian filtering method updates a \emph{state estimation} of the target given knowledge of the system and measurements~\cite{sarkka2013bayesian}.  The goal of these methods is to estimate the state of a target system where the dynamics are known, using measurements taken a fixed time intervals, and accounting for noise or uncertainty in the system and measurements. There are two parts to these methods. First, a new measurement is used to generate the best possible estimate of the system state. Second, that estimate is propagated forward using the system's dynamics until the subsequent measurement is available. Here, we present a method for accurately implementing that second step in the presence of noise.

The general framework for these problems was first described by Kalman~\cite{kalman1961new}. A Kalman-Bucy type filtering consists of two steps: a \emph{measurement-update} part that updates the estimation using the measurement and a state estimation from previous steps, and a \emph{time-update} step that updates the state estimation between consecutive measurements. The level set Kalman filter (LSKF) method focuses on the improvement of the \emph{time-update} step, and the discussions that follow are restricted to the \emph{time-update} part unless we explicitly mention the \emph{measurement-update}.

Assuming that the dynamics are linear (in space), and all noise is Gaussian, Kalman-Bucy filtering~\cite{kalman1961new} gives an optimal way to estimate the system state for the \emph{time-update}. However, in many cases, we would like to generalize this method to a system where the dynamics of interest are nonlinear. For such a nonlinear system, a Gaussian probability density function (PDF) is no longer preserved, even when the dynamics are quadratic~\cite{gustafsson2011some}. When the dynamics are approximately linear for the region of state space where most of the PDF lies, Unscented Kalman filtering (UKF)  \cite{julier2004unscented} can provide a useful method. Additionally, the UKF is easy to implement since it does not require explicit evaluation of the Jacobian of the velocity field, which is not readily available in practical problems where, for example, the velocity field is implicitly defined. 

Researchers have improved how the process noise is incorporated, but so far, methods are significantly more complicated than the UKF (e.g., requiring the explicit calculation of a Jacobian) or only work with specific numerical solvers.
Good examples include the continuous-discrete Kalman filter~\cite{sarkka2007unscented} (CDKF) and the continuous-discrete Cubature Kalman filter~\cite{arasaratnam2010cubature} (CD-CKF). (Note the latter uses the Cubature Kalman transformation as introduced in~\cite{arasaratnam2009cubature} instead of the unscented Kalman transformation, however, it can be reformulated to use either, as explained in~\cite{kulikov2017accurate}.) The CDKF in~\cite{sarkka2007unscented} addresses the continuous nature of the process noise; however, the derivation of their method involves approximations such that their method is not exact even if the dynamics are linear. Moreover, the computation of the prediction is significantly more complicated than the original UKF, eroding its advantage the CDKF offers by removing intermediate timesteps. The CD-CKF uses a $1.5$-order It\^{o}-Taylor expansion of the stochastic differential equation, which uses the Jacobian (or approximations of it) that can be difficult to calculate. Though the explicit Jacobian can be avoided by deriving specific Runge-Kutta methods as described in~\cite{newton1991asymptotically}, this still complicates programming and limits the type of numerical solvers available. 

Here, we propose the LSKF that addresses these issues. Our method: 1) does not require the Jacobian or any spatial partial derivative of the drift function explicitly, 2) allows the use of adaptive ordinary differential equation~(ODE) solvers and frees the user from choosing the time discretization, and 3) shows performance improvements over the CD-CKF, even in the challenging test cases presented in~\cite{arasaratnam2010cubature}. From a theory point of view, our derivation of the method is based on the apparent velocity of the level set of the probability distribution, which is a novel approach to analyze these problems, and may enable further developments.


\section{Problem Statement and Background}
(Note on notation: we distinguish matrix or vector-valued quantities $\bv$ versus scalar-valued quantities $v_1$ by using a bold font. A list of symbols is included in Table \ref{lst:symbols} in the appendix. )
\subsection{Problem formulation}
A continuous dynamic discrete measurement system includes a continuous-time process described by a Fokker-Planck equation and a discrete measurement process with measurement noise.

The discrete measurement process is defined by a transformation $h$ from state space to the observation space, together with a zero-mean Gaussian observation noise $\mathbf{\tau} \sim \mathcal{N}(\mathbf{0},\mathbf{R})$. Suppose at the time of measurement, the state vector is $\bx$, then the measurement $\by$ is given by:
\begin{equation}
\by = h(\bx) + \mathbf{\tau},
\end{equation}where $\mathbf{\tau} \sim \mathcal{N}(\mathbf{0},\mathbf{R})$.

In between the time where two consecutive measurements are taken, we assume that the process noise is Gaussian, and the system equations are described by the It\^{o} process~\cite{jazwinski2007stochastic}: 
\begin{equation}
\label{eqn:ito_proc}
\frac{d \bx}{dt} = \bv(\bx) + \sqrt{\bK} \frac{d \beta}{dt},
\end{equation}
where $\bv$ is the \textbf{drift function}, or velocity field defined by the dynamics, $\beta$ is a standard $d$ dimension Brownian process. $\bK$ is an $d \times d$ positive semi-definite continuous \textbf{process noise matrix}, and $\bK = \sqrt{\bK} \sqrt{\bK}^T$.

Then, the PDF $u$ is described by the Fokker-Planck equation of the following form:

\begin{equation}
\label{eqn:heatadv_prestate}
\frac{d u}{dt} =  \frac{1}{2}\nabla \cdot \bK \nabla u - \nabla \cdot (\bv u).
\end{equation}
\subsection{Brief review of existing time-update methods}

Under the assumption that the drift function $\bv$ is linear in space and the process noise matrix $\bK$ is constant, it can be shown that a Gaussian PDF $u$ is preserved. (A proof of this fact using level sets is in the next section). In~\cite{kalman1961new}, the derivation of the \emph{time-update} step is based on this observation.

One often wants to generalize this method to nonlinear models even when Gaussian distributions are no longer exactly preserved. One generalization would be to use the Jacobian of the drift function at the mean of the distribution, which is a key part of the Extended Kalman-Bucy Filter (EKF) method. One disadvantage of the EKF is the need for an explicit formula of the Jacobian of the drift function. The UKF is also derived based on the assumption of a local linearization of velocity; however, the explicit evaluation of the Jacobian is avoided.


In~\cite{arasaratnam2010cubature}, after their comparison between the continuous-discrete cubature Kalman filter (CD-CKF), continuous-discrete unscented Kalman filter (CD-UKF), and continuous-discrete extended Kalman filter (CD-EKF), they concluded that ''the CD-CKF is the choice for challenging radar problems''. [2, p.4987] In~\cite{kulikov2017accurate}, Kulikov and Kulikova presented a new filtering method named the accurate continuous-discrete extended Kalman filter (ACD-EKF), and compared it to the CD-CKF and CD-UKF. Note the implementation of the CD-UKF in~\cite{kulikov2017accurate} is more sophisticated than that in~\cite{arasaratnam2010cubature} as it uses the IT-1.5 that is the same as presented in~\cite{arasaratnam2010cubature} for the CD-CKF. With the improved implementation of the CD-UKF, Kulikov and Kulikova reported in~\cite{kulikov2017accurate} that the CD-CKF and the CD-UKF perform similarly. In addition, while the ACD-EKF requires less tuning than the CD-CKF, with sufficient timestep subdivision, the CD-CKF seems to outperform the ACD-EKF, as stated in the conclusion of~\cite{kulikov2017accurate}: \emph{The highest accuracy is provided by the most time-consuming filters CD-CKF256 and CD-UKF256.} Therefore, we conclude that with sufficient timestep subdivision, the CD-CKF is still a benchmark method to compare against.

\subsection{the time-update of the CD-CKF with Ito-Taylor expansion}
\label{ssec:CD-CKF}
In~\cite{arasaratnam2010cubature}, the Ito-Taylor expansion of order $1.5$~(IT-1.5) is first introduced to the \emph{time-update} step of the continuous-discrete filtering. It is confirmed in~\cite{sarkka2012continuous} that the Unscented Kalman filtering with IT-1.5 achieves similar performance to the CD-CKF. For the purpose of comparing time-update, the performance of the CD-CKF should suffice for a benchmark. Additionally, we noted that while the IT-1.5 should converge to the accurate result with a weak order of convergence $2$, the implementation in~\cite{arasaratnam2010cubature} chooses to only discretize noise once between the measurements and does not converge to this result, presumably as a tradeoff to improve speed. For the sake of complete comparison, we also implemented a version with a proper IT-1.5 expansion that discretizes noise for every timestep subdivision.

Here we restate the \emph{square-root form} of the CD-CKF, as derived in~\cite{arasaratnam2010cubature}.

\emph{Time-update}: For update with a timestep of $\Delta t$, we define the function
\begin{equation}
\mathbf{f}_d(\bx,t) := \bx(t) + \Delta t \bv(\bx(t),t)+\frac{1}{2}\Delta t^2(\mathbb{L}_0(\bv(\bx,t))),
\end{equation}
where the opertor $\mathbb{L}_0$ is defined as
\begin{align}
\mathbb{L}_0 &:= \frac{d}{d t} + \sum_{i=1}^d v_i\frac{\partial}{\partial x_i}\\
&+\frac{1}{2}\sum_{j,p,q=1}^d\sqrt{\bK}_{p,j}\sqrt{\bK}_{j,q}\frac{\partial^2}{\partial x_p \partial x_q}\nonumber.
\end{align}
We also define the operator $\mathbb{L}(\bv)$ be the square matrix defined entrywisely with its $(i,j)$th element being $\mathbb{L}_j v_i$, where 
\begin{equation}
\mathbb{L}_j := \sum_{i=1}^d \sqrt{\bK}_{i,j}\frac{\partial}{\partial x_i}.
\end{equation}
Then the time-update algorithm is as follows: 
\begin{algorithmic}[1]
\REQUIRE{Guess of initial state $\bx_0$ at time $t_0$, and a factorization of a guess of covariance matrix $\bM$.
Drift velocity $\bv$, continuous process noise matrix $\bK$.}
\STATE{Find the $2d$ cubature points $\bx_i = \bx_0 + (\bM)_i, \bx_{i+d} = \bx_0 - (\bM)_i$, where $i=1,\dots,d$, and $(\bM)_i$ denotes the $i$th column of $\bM$.}
\STATE{Evaluate the propagated cubature point: $\bx^*_i = \mathbf{f}_d(\bx_i,t_0)$, where where $i=1,\dots,2d$.}
\STATE{Estimate the updated mean by the average of the cubature points: 
\begin{equation}
\bx^*_0 := \frac{1}{2d} \bx^*_i.
\end{equation}}
\STATE{Estimate the factorization of the covariance matrix by the triangularization of the concatenated matrix:
\begin{equation}
\bM^* = \text{tria}\left(
\begin{bmatrix}
\mathbf{X}^*| \sqrt{\Delta t}(\sqrt{\bK}+\frac{\Delta t}{2}\mathbb{L}(\bv))|\sqrt{\frac{\Delta t^3}{12}\mathbb{L}(\bv)}
\end{bmatrix}
\right),
\end{equation}
where $\text{tria}(\cdot)$ denotes applying a triangularization procedure such as the Gram-Schmidt based QR-decomposition, $\mathbf{X}^*$ is a matrix with $i$th column being $\bx^*_i$, $\mathbb{L}(\bv) = \mathbb{L}(\bv(\bx^*_0,t_0))$. 
}
\RETURN{Estimated updated mean $\bx^*_0$ and updated factorization of the covariance matrix $\bM^*$ at $t_0 + \Delta t$.}
\end{algorithmic}
\subsection{The square root form of cubature Kalman measurement-update}
\label{ssec:sqcdckf}
Here, we discuss the \emph{measurement-update} method used in the CD-CKF and the LSKF. Since the operations from the time-update can cause $\bM$ to be positive semi-definite, a measurement-update method that can accommodate a positive semi-definite matrix is required for reliability, as pointed out in~\cite{arasaratnam2010cubature}. We used the measurement-update method from the square root CD-CKF method, as stated in Appendix B of~\cite{arasaratnam2010cubature}.  Since the notations used are different, the measurement-update of the square root CD-CKF is restated here for reference.
\begin{algorithmic}[1]
\REQUIRE{Factorization of the predicted covariance matrix before measurement $\bM$, predicted mean before measurement $\bar{\bx}$, measurement $\mathbf{y}$, 
a factorization of the covariance of the measurement noise matrix $\sqrt{\mathbf{R}}$, measurement function $\mathbf{h}$}
\STATE{Find the concatenated cubature points matrix of size $d\times 2d$:
\begin{equation}
\mathbf{N} = \bar{\bx} + \sqrt{2d}\begin{bmatrix}
\bM | -\bM
\end{bmatrix},
\end{equation}
where the vector-matrix addition is applied as $\bar{\bx}$ added to each \textbf{column} of the concatenated matrix $[\bM | -\bM]$
}
\STATE{Evaluated the propagated cubature points
\begin{equation}
\mathbf{Y} = \mathbf{h}(\mathbf{N}),
\end{equation}
where the measurement function $\mathbf{h}(\cdot)$ is evaluated on each \textbf{column}.}
\STATE{
Estimate the predicted measurement 
\begin{equation}
\bar{\by} = \frac{1}{2d}\sum_{i=1}^{2d} \mathbf{Y}_i.
\end{equation}
}
\STATE{
compute matrices $\mathbf{T}_{11}$,$\mathbf{T}_{21}$, and $\mathbf{T}_{22}$ by the following QR-factorization:
\begin{equation}
\begin{bmatrix}
\mathbf{T}_{11}&\mathbf{O}\\ \mathbf{T}_{21} & \mathbf{T}_{22}
\end{bmatrix}= \qr
\left(
\begin{bmatrix}
\mathbf{Y}&\sqrt{\mathbf{R}}\\
\mathbf{N}&\mathbf{O}
\end{bmatrix}
\right),
\end{equation}
where $\mathbf{O}$ denotes a zero matrix of appropriate size. 
}
\STATE{
Estimate the cubature gain
\begin{equation}
\mathbf{W} = \mathbf{T}_{21} / \mathbf{T}_{11},
\end{equation}
where $/$ represents solving for $\mathbf{W}$ in $\mathbf{T}_{21} = \mathbf{W} \mathbf{T}_{11}$ using a backward stable solver.
}
\STATE{
Estimate the mean of the corrected state
\begin{equation}
\hat{\bx} = \bar{\bx} + \mathbf{W}(\by - \bar{\by}).
\end{equation}
}
\STATE{
Estimate a factorization of the corrected covariance matrix
\begin{equation}
\hat{\bM} = \mathbf{T}_{22}.
\end{equation}
}
\RETURN{Corrected mean $\hat{\bx}$ and a factorization of the corrected covariance matrix $\hat{\bM}$.}
\end{algorithmic}

\section{Derivation of the time-update of the level set Kalman filter}
\label{sec:derive}
In this section, we focus on deriving the \emph{time-update} of the level set Kalman filter (LSKF). In the first subsection, we show that a Gaussian is preserved by a local linear approximation to the original Fokker-Planck equation by tracking its level set. In this process, we observe that the apparent velocity of the level set is given by the drift function plus an additional term which we name as the \textbf{diffusion velocity}. In the second subsection, using the apparent velocity of the level set, we derive a numerical method that tracks such Gaussian particles for the \emph{time-update} step. In the third subsection, we state the \textbf{averaged velocity} version of the \emph{time-update} part of the LSKF, which turns out to give better results numerically.
\subsection{Preservation of Gaussian for a local linear approximation}
\label{ssec:pres_gaus}
Without loss of generality (WLOG), we may assume the particle of concern is centered at $\textbf{0}$. Moreover, since we are interested in how the dynamics and diffusion \emph{deform} the distribution, we may also set the drift function at center $\bv(\mathbf{0}) = \mathbf{0}$. With these simplifications in mind, the \emph{original} Fokker-Planck equation can be restated as: 
\begin{equation}
\label{eqn:heatadv}
\frac{d u}{dt} =  \frac{1}{2}\nabla \cdot \bK \nabla u - \nabla \cdot (\bv u),
\end{equation}
where $u=u(\bx,t)$ is the PDF, $\bK$ is a constant matrix-valued continuous Gaussian \textbf{process noise}, and $\bv = \bv(\bx)$ is the \textbf{drift velocity (field)}, and $\bv(\mathbf{0}) = \mathbf{0}$ by our WLOG simplification.

Then, we approximate (\ref{eqn:heatadv}) by taking a linear approximation of $\bv$: $\bv(\bx) \approx \mathbf{Jx}$. ($\mathbf{J}$ is the Jacobian matrix.) Then: 
\begin{clm}
\label{clm:preserve}
A Gaussian distribution is preserved by a Fokker-Planck equation with a linear drift function. 

Stated explicitly: For the following equation:
\begin{equation}
\label{eqn:approx_diff_eqn}
\frac{d u}{dt} =  \frac{1}{2}\nabla \cdot \bK \nabla u - \nabla \cdot (\mathbf{Jx} u),
\end{equation}
if the initial condition $u(\bx,0)$ is given by a Gaussian function
\begin{equation}
u(\bx,0) = \frac{1}{\sqrt{(2 \pi)^d \det(\bSigma)}} \exp(-\frac{\bx^T \bSigma^{-1}\bx}{2}),
\end{equation}
then $u(\bx,t)$ is also a Gaussian distribution.
\end{clm}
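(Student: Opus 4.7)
The plan is to substitute a zero-mean Gaussian ansatz into (\ref{eqn:approx_diff_eqn}) and reduce the PDE to a matrix ODE for the covariance, then invoke standard theory to obtain the solution. Because the linear drift $\bJ\bx$ vanishes at the origin, a zero-mean initial distribution stays zero-mean, so I take
$u(\bx,t) = (2\pi)^{-d/2}(\det\bSigma(t))^{-1/2}\exp(-\tfrac{1}{2}\bx^T\bSigma(t)^{-1}\bx)$
with $\bSigma(0)=\bSigma$, and seek an ODE for $\bSigma(t)$ that makes this $u$ solve the PDE.

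First I would compute the spatial operators on the ansatz using $\nabla u = -u\bSigma^{-1}\bx$ and $\nabla^2 u = u(\bSigma^{-1}\bx\bx^T\bSigma^{-1} - \bSigma^{-1})$; since $\bK$ is constant, both $\tfrac{1}{2}\nabla\cdot\bK\nabla u$ and $-\nabla\cdot(\bJ\bx\,u)$ come out as $u$ times (quadratic form in $\bx$) plus (scalar). For the time derivative, Jacobi's formula gives $\partial_t u/u = \tfrac{1}{2}\bx^T\bSigma^{-1}\dot{\bSigma}\bSigma^{-1}\bx - \tfrac{1}{2}\tr(\bSigma^{-1}\dot{\bSigma})$. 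Dividing the PDE through by $u>0$ and symmetrizing the cross term via $\bx^T\bJ^T\bSigma^{-1}\bx = \tfrac{1}{2}\bx^T(\bJ^T\bSigma^{-1}+\bSigma^{-1}\bJ)\bx$, I would match the quadratic-in-$\bx$ and constant-in-$\bx$ parts separately. The quadratic match, after multiplying by $\bSigma$ on both sides, collapses to the Lyapunov-type ODE $\dot{\bSigma} = \bJ\bSigma + \bSigma\bJ^T + \bK$. Taking the trace against $\bSigma^{-1}$ shows that this same ODE automatically satisfies the scalar match, so the normalization evolves consistently. Since the RHS is affine in $\bSigma$, Picard-Lindel\"of gives a unique global solution, and the variation-of-constants formula $\bSigma(t) = e^{t\bJ}\bSigma(0)e^{t\bJ^T} + \int_0^t e^{(t-s)\bJ}\bK e^{(t-s)\bJ^T}ds$ keeps $\bSigma(t)$ symmetric positive definite (the first term is PD whenever $\bSigma(0)$ is, the second is PSD). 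A standard uniqueness result for the Fokker-Planck Cauchy problem in a class like tempered densities with finite second moment then promotes the ansatz to \emph{the} solution.

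The main obstacle is the algebraic bookkeeping in the matching step: cleanly separating the identity (after dividing by $u$) into its genuinely independent quadratic and scalar components, and symmetrizing the cross term correctly so that the match produces the stated Lyapunov form $\bJ\bSigma+\bSigma\bJ^T+\bK$ rather than an unsymmetric variant. A lesser issue is spelling out the uniqueness class for the Fokker-Planck equation; however, for the subsequent LSKF derivation it is enough that the Gaussian \emph{exists} as a solution, which the ansatz computation establishes directly.
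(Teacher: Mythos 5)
Your proof is correct, but it takes a genuinely different route from the paper's. You substitute the zero-mean Gaussian ansatz directly into (\ref{eqn:approx_diff_eqn}) and close the PDE by matching quadratic-in-$\bx$ and constant parts, arriving at the Lyapunov equation $\dot{\bSigma}=\bJ\bSigma+\bSigma\bJ^T+\bK$; your symmetrization of the cross term, the observation that the trace (normalization) identity is then satisfied automatically, and the variation-of-constants argument for symmetry and positive definiteness are all sound. This is essentially the classical Kalman--Bucy moment-closure argument --- the paper itself remarks that the claim is a corollary of equation (29) of Kalman and Bucy. The paper instead proves the claim via Lemma~\ref{lem:level_set_velocity}: it forms the quotient $F(\bx,t)=u(\bx,t)/u(\mathbf{0},t)$, considers the level sets $\{F=c\}$, and verifies that the linear field $\bv_L=\bJ\bx+\tfrac{1}{2}\bK\bSigma^{-1}\bx$ satisfies the level-set equation $\partial_t F+\bv_L\cdot\nabla F=0$ for every $c$, so the entire density is instantaneously transported by a single linear map, and linear maps preserve Gaussians. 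What your route buys: the explicit covariance ODE, global existence and positivity, and --- modulo pinning down a uniqueness class for the Cauchy problem, which you correctly flag as a loose end --- the stronger statement that the Gaussian is \emph{the} solution; the paper's level-set argument is more informal on exactly this point, since the $\bSigma$ appearing in $\bv_L$ is implicitly the self-consistent time-dependent covariance. What the paper's route buys: a pointwise apparent velocity for individual points on a level set, which is precisely the ingredient the LSKF algorithm needs --- it yields the square-root-factor ODE $d\bM/dt=\bJ\bM+\tfrac{1}{2}\bK(\bM^T)^{-1}$ as an explicit formula with no integral and no need to form $\bSigma$, and it is the viewpoint the authors suggest may extend beyond Gaussians (e.g., to $\alpha$-stable distributions). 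The two formulations are consistent: differentiating $\bSigma=\bM\bM^T$ under the paper's factor ODE reproduces exactly your Lyapunov equation.
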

The rest of this subsection proves this claim. 

We define the auxiliary function $F$ by
\begin{equation}
\label{eqn:aux_fun}
F(\bx,t) := \frac{u(\bx,t)}{u(\textbf{0},t)}.
\end{equation}
Consider a \textbf{level set} of the function $F$ at $t$ defined as
\begin{equation}
\label{eqn:levelset}
\mathcal{L}(t) := \left\{\bx \in \mathbb{R}^d | F(\bx,t)  = c \right\},
\end{equation}where $0<c<1$ is some fixed scalar constant.
$\mathcal{L}(t)$ is (usually) a surface, and for a Gaussian as defined in $u(\bx,0)$, it is an ellipsoid. As the function $F$ varies in time, the set $\mathcal{L}$ propagates in space. To describe the movement of the set $\mathcal{L}$, we consider the apparent velocity the traveling surface. 

In particular, a \textbf{velocity of level set} $\bv_\mathcal{L}$ is \emph{defined} by a velocity field satisfying the \textbf{level-set equation}:
\begin{equation}
\label{eqn:level_set_vel}
\frac{dF}{dt}+ \bv_\mathcal{L} \cdot \nabla F = 0.
\end{equation} (Note: this can be understood as the chain rule. For a more detailed explanation, refer to equations (1) and (2) in~\cite{sethian1985curvature}. Also, note the \textbf{velocity of the level set} is uniquely defined up to tangential directions since tangential movements along the level set vanish since they preserve the level set.)

To proceed to the proof, we first consider the lemma:
\begin{lem}
\label{lem:level_set_velocity}
The velocity field 
\begin{equation}
\label{eqn:level_set_ODE}
\bv_L = \bJ\bx + \frac{1}{2} \bK \bSigma^{-1} \bx
\end{equation}	 is a velocity of the level set for $\mathcal{L}(0)$ defined in (\ref{eqn:levelset}). Also, this velocity of the level set is linear in space. 
\end{lem}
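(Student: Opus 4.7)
The plan is to verify the level-set equation \eqref{eqn:level_set_vel} by direct computation at $t = 0$ using the explicit Gaussian form of $u(\bx, 0)$ and the evolution equation \eqref{eqn:approx_diff_eqn}. Since $F(\bx, 0) = u(\bx, 0)/u(\mathbf{0}, 0) = \exp(-\bx^T \bSigma^{-1} \bx / 2)$, the spatial gradient is immediate: $\nabla F(\bx, 0) = -F(\bx, 0)\, \bSigma^{-1}\bx$, using that $\bSigma$ is symmetric. The time derivative $\partial F/\partial t$ requires a bit more work, but all the ingredients are explicit.

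First I would compute $u_t$ at $t=0$ from the right-hand side of \eqref{eqn:approx_diff_eqn}. For the diffusion term, $\nabla \cdot (\bK \nabla u) = \nabla \cdot (-u \bK \bSigma^{-1}\bx)$ expands via the product rule into a quadratic-in-$\bx$ piece $u\, \bx^T \bSigma^{-1} \bK \bSigma^{-1} \bx$ and a constant-in-$\bx$ piece $-u\, \tr(\bK \bSigma^{-1})$. For the drift term, $\nabla \cdot (\bJ \bx \, u) = u\,\tr(\bJ) + \bJ \bx \cdot \nabla u = u\, \tr(\bJ) - u\, \bx^T \bJ^T \bSigma^{-1} \bx$. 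Assembling gives an expression of the form $u_t/u = Q(\bx) + c$, where $Q$ is quadratic in $\bx$ and $c = -\tfrac12 \tr(\bK\bSigma^{-1}) - \tr(\bJ)$ is independent of $\bx$.

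Next I would differentiate $F = u/u(\mathbf{0}, t)$ in time and use $F(\mathbf{0}, t) = 1$ to obtain $\partial F/\partial t = F \left[u_t/u - u_t(\mathbf{0},t)/u(\mathbf{0},t)\right]$. The key observation is that $u_t(\mathbf{0},0)/u(\mathbf{0},0) = c$, so all the $\bx$-independent contributions (the $\tr(\bJ)$ and $\tr(\bK\bSigma^{-1})$ terms) cancel, leaving only the quadratic part:
\begin{equation}
\frac{\partial F}{\partial t}\bigg|_{t=0} = F \left[ \tfrac{1}{2} \bx^T \bSigma^{-1} \bK \bSigma^{-1} \bx + \bx^T \bJ^T \bSigma^{-1} \bx \right].
\end{equation}
On the other hand, $\bv_L \cdot \nabla F = -F \bigl[\bx^T \bJ^T \bSigma^{-1} \bx + \tfrac12 \bx^T \bSigma^{-1} \bK \bSigma^{-1} \bx \bigr]$, using symmetry of $\bK$ and $\bSigma$, so the two quantities sum to zero and \eqref{eqn:level_set_vel} holds. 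Linearity of $\bv_L$ in space is immediate from $\bv_L = (\bJ + \tfrac{1}{2}\bK \bSigma^{-1})\bx$.

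The only real obstacle here is careful bookkeeping of the matrix-vector contractions, specifically making sure to apply the product rule correctly inside $\nabla \cdot (\bK \nabla u)$ and to use the symmetry of $\bSigma$ (and hence $\bSigma^{-1}$) when identifying $\bx^T \bSigma^{-1} \bK \bSigma^{-1} \bx$ with $(\bK \bSigma^{-1} \bx)^T \bSigma^{-1} \bx$. Once those identifications are made, the claimed cancellation is mechanical.
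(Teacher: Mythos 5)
Your proposal is correct and follows essentially the same route as the paper's proof: both verify the level-set equation at $t=0$ by substituting the Fokker--Planck right-hand side into the quotient rule for $F = u(\bx,t)/u(\mathbf{0},t)$, expanding the diffusion and drift terms into a quadratic-in-$\bx$ piece plus the trace pieces $\tr(\bK\bSigma^{-1})$ and $\tr(\bJ)$, cancelling the $\bx$-independent parts, and matching the remainder against $-\bv_L\cdot\nabla F$ using the symmetry of $\bK$ and $\bSigma^{-1}$. Your logarithmic-derivative bookkeeping $F_t = F\left[u_t/u - u_t(\mathbf{0},t)/u(\mathbf{0},t)\right]$ is merely a slightly cleaner packaging of the paper's explicit evaluation of the quotient rule at $\bx$ and $\mathbf{0}$, not a different argument.
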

\begin{proof}[Proof of Lemma \ref{lem:level_set_velocity}]
First, we note the velocity field is linear in space. To check that it is a velocity of the level set:

Since $F$ is defined as a quotient of $u(\bx,0)$ and $u(\b0,0)$, we may omit the normalizing factor in $u$, and take 
\begin{equation}
\label{eqn:initial_density}
u(\bx,0) =\exp(-\frac{\bx^T \bSigma^{-1}\bx}{2})
\end{equation}
as the initial condition. 
Then: 
\begin{align}
\label{eqn:lem_int_1}
\frac{dF}{dt}|_{t=0} &= \frac{u'(\bx,0)u(\mathbf{0},0) - u'(\mathbf{0},0)u(\bx,0)}{u^2(\mathbf{0},0)}.
\end{align}
By (\ref{eqn:initial_density}), we note that $u(\mathbf{0},0) = 1$. We simplify (\ref{eqn:lem_int_1}) using this substitution:
\begin{align}
\frac{dF}{dt}|_{t=0}&=u'(\bx,0) -  \exp(-\frac{1}{2}\bx^T \bSigma^{-1} \bx)u'(\mathbf{0},0).
\end{align}
Substitute time derivatives $u'$ with (\ref{eqn:approx_diff_eqn}):

\begin{align}
\label{eqn:dFdt_inter_1}
\frac{dF}{dt}|_{t=0} &= \left(\frac{1}{2}\nabla \cdot \bK \nabla u - \nabla \cdot (\bJ\bx u)\right)|_{t=0,\bx=\bx} \\
&- \exp(\dots)\left(\frac{1}{2}\nabla \cdot \bK \nabla u - \nabla \cdot (\bJ\bx u)\right)|_{t=0,\bx=\mathbf{0}} \nonumber,
\end{align}
where
 \begin{equation}
\exp(\dots) := \exp(-\frac{1}{2}\bx^T \bSigma^{-1} \bx).
\end{equation}
continuing the computation, we find that: 
\begin{align}
\nabla \cdot \bK \nabla u &= \nabla \cdot \left(-\bK \exp(\dots) \bSigma^{-1}\bx\right)\\
&= \exp(\dots) \bSigma^{-1}\bx \cdot \bK \bSigma^{-1}\bx - \exp(\dots) \tr(\bK \bSigma^{-1})\\
		&= \exp(\dots)(\bSigma^{-1}\bx \cdot \bK \bSigma^{-1}\bx-\tr(\bK \bSigma^{-1})).
\end{align}
And
\begin{align}
\nabla \cdot((\bJ\bx u))&=\nabla u \cdot \bJ\bx + u \nabla \cdot \bJ\bx\\
&=\exp(\dots)(-\bSigma^{-1}\bx)\cdot \bJ\bx + \exp(\dots)\tr(\bJ)\\
&= \exp(\dots)(\tr(\bJ) - \bSigma^{-1}\bx \cdot \bJ\bx).
\end{align}
Substitute these two terms into (\ref{eqn:dFdt_inter_1}), we have 
\begin{align}
\frac{dF}{dt} &= \exp(\dots)\left(\frac{1}{2}\bSigma^{-1}\bx \cdot \bK \bSigma^{-1}\bx-\frac{1}{2}\tr(\bK \bSigma^{-1}) \right.\\&+\left.\vphantom{\frac{1}{2}\bSigma^{-1}\bx \cdot \bK \bSigma^{-1}\bx} \tr(\bJ) - \bSigma^{-1}\bx \cdot \bJ\bx\right) \nonumber\\
&- \exp(\dots)\left(-\frac{1}{2}\tr(\bK \bSigma^{-1}) +\tr(\bJ)\right)\nonumber\\
\label{eqn:lem_dFdt_final}
&= \exp(-\frac{1}{2}\bx^T \bSigma^{-1} \bx) \left(\frac{1}{2}\bSigma^{-1}\bx \cdot \bK \bSigma^{-1}\bx + \bSigma^{-1}\bx \cdot \bJ\bx\right).
\end{align}
Meanwhile, we check that
\begin{align}
 \bv_L\cdot \nabla F&= \bv_L \cdot \nabla\left(\exp(-\frac{\bx^T \bSigma^{-1}\bx}{2})\right)\\
 &=\bv_L \cdot (-\bSigma^{-1} \bx \exp(-\frac{1}{2}\bx^T \bSigma^{-1} \bx)).
 \end{align}
Substitute the level set velocity term $\bv_L$ from (\ref{eqn:level_set_ODE}), we get that 
 \begin{align}
  \label{eqn:lem_vlnf_final}
 \bv_L\cdot \nabla F&=-\exp(-\frac{1}{2}\bx^T\bSigma^{-1} \bx) (\bJ\bx + \frac{1}{2}\bK \bSigma^{-1} \bx) \cdot \bSigma^{-1}\bx.
\end{align}
Comparing the results from (\ref{eqn:lem_vlnf_final}) and (\ref{eqn:lem_dFdt_final}), we conclude that 
\begin{equation}
\frac{dF}{dt}+ \bv_L \cdot \nabla F = 0.
\end{equation}
Therefore $\bv_L$ is a velocity of the level set, as defined in (\ref{eqn:level_set_vel}).
\end{proof}
We now proceed to the proof Claim \ref{lem:level_set_velocity}:
\begin{proof}[Proof of Claim \ref{clm:preserve}]
Lemma \ref{lem:level_set_velocity} shows that every level set is propagated by a linear velocity field independent of the choice of level set (in other words, independent of the choice of $c$.) In particular, it is propagated by a linear transformation instantaneously. Consequently, between any fixed time $0$ and $\Delta t$, the Gaussian is mapped by a linear transformation. Since a linear transformation maps Gaussians to Gaussians, the velocity field $\bv_L$ is always well-defined as the covariance term in (\ref{eqn:level_set_ODE}) is defined, whereas other terms are known.
\end{proof}
Before we proceed, we should note that the claim is a corollary of equation (29) in~\cite{kalman1961new} by Kalman and Bucy that started the discussion of continuous-discrete Kalman filtering. The significance of the proof is that by using a square-root factorization, the transformation is now given by an \emph{explicit formula} that does not involve an integral. We also note that while the formula (\ref{eqn:level_set_ODE}) is restricted to a normal distribution, tracking distribution by analyzing the propagation of level set as defined in (\ref{eqn:levelset}) can potentially be applied to an $\alpha$-stable distribution introduced in~\cite{samorodnitsky1996stable}. This gives a possibility to extend $\alpha$-stable filtering methods~\cite{leglaive2017alpha,Talebi2018Distributed} to a continuous-discrete problem. 
\subsection{Deriving the time-update of the LSKF}
We now describe the numerical algorithm inspired from the velocity of level set (\ref{eqn:level_set_ODE}). Tracking the movement of the Gaussian is equivalent to tracking one of its ellipsoid level sets (as defined in~(\ref{eqn:levelset})). If the mean of the Gaussian remains at $\mathbf{0}$, then a factorization of the covariance matrix $\bSigma = \bM \bM^T$ can be used to represent the Gaussian. This factorization also represents the unique level set ellipsoid spanned by the columns of the factorization $\bM$. More specifically, set 
\begin{equation}
\bM(0) = \begin{bmatrix}
\bx_1(0)&\cdots&\bx_d(0)\\
\end{bmatrix}
\end{equation}
as initial conditions, and let $\bx_i(t)$ be the solutions of (\ref{eqn:level_set_ODE}). (One may interpret $\bx_i(t)$ as a \textbf{point on the level set}, which travels at the apparent speed defined by a velocity of level set.)
 
Then $\bSigma(t)$ defined as
\begin{equation}
\bSigma(t) := \bM(t) \bM(t)^T
\end{equation}
is the covariance matrix for the Gaussian at time $t$ since it is a similarity transformation. Suppose $A$ is the linear transformation from time $0$ to $t$, then $\bx_i(t) = \mathbf{A} \bx_i(0)$, and 
\begin{align*}
\bSigma(t) &= \bM(t) \bM(t)^T \\
&=\mathbf{A}\bM(0) \bM(0)^T \mathbf{A}\\
&= \mathbf{A} \bSigma(0) \mathbf{A}^T.
\end{align*}
Therefore applying linear transformation $\mathbf{A}$ to the ellipse is equivalent to applying it to all column vectors in $\mathbf{M}$. 

The Jacobian of the velocity field $\bJ$ that appears in (\ref{eqn:level_set_ODE}) is not explicitly needed. Instead of direct evaluation of the Jacobian, we approximate the effect of the drift velocity by applying a quadrature rule. In this section, we use the forward difference in space to derive the method. Specifically, we notice $\bSigma^{-1} = \bM^{-T}\bM^{-1}$, (where $\bM^{-T}$ indicates the inverse transpose) and the $\bx_i$s are columns of $\bM$. Therefore for all $\bx_i$s on the level set ellipsoid, by substituting the terms in (\ref{eqn:level_set_ODE}), we find the apparent velocity of the level set is approximated by
\begin{equation}
\label{eqn:level_set_ODE_num}
\frac{d \bx_i}{dt} = \mathbf{\bv}(\bar{\bx}+\bx_i) - \bv(\bar{\bx}) + \frac{1}{2}\bK(\bM^{T})^{-1}\mathbf{e}_i,
\end{equation}
where $\bar{\bx}$ is the \textbf{mean} of the Gaussian. In the case the Gaussian is centered at $\mathbf{0}$, $\bar{\bx} = \mathbf{0}$. $\mathbf{e}_i$ is the $i$th  unit vector with all entries $0$ except that $i$th entry is $1$. 

Recall that the $\bx_i$s are columns of $\bM$. In a matrix short-hand (where the matrix-vector additions are defined entry-wise, and recall vectors are column vectors): 
\label{sec:level_set_plain}
\begin{equation}
\label{eqn:level_set_ODE_num_matform}
\frac{d \bM}{dt} = \bv(\bar{\bx}+\bM) - \bv(\bar{\bx}) + \frac{1}{2}\bK(\bM^{T})^{-1}.
\end{equation}
Whereas the velocity for center is given by 
\begin{equation}
\frac{d \bar{\bx}}{dt} = \bv(\bar{\bx}).
\end{equation}
Based on the form of the equations, the assumptions $\bar{\bx} = \mathbf{0}$ and $\bv(\bar{\bx}) = \mathbf{0}$ can be dropped. 

Concatenating $\bar{\bx}$ and $\bM$ as a variable $(\bar{\bx}|\bM)$ of dimension $d\times (d+1)$, we obtained a nonlinear ODE in this space. Any standard ODE solver can be applied to this ODE to complete the \emph{time-update} between the measurements.
 
Note that to evaluate the velocity of one point $\bx_i$ on the level set, both the mean $\bar{\bx}$ and all other points $\bx_j$ for this Gaussian kernel are needed, hence the points on a level set cannot be updated independently (in contrast to the time-update step for both the UKF and the CD-CKF). \emph{This provides intuition about the difference between our method and others: while other methods looks at the past covariance information and rely on an expansion in time, our method uses only the current information about the covariance matrix. Except for the purpose of numerically solving the ODE, our method does not need time-discretization.}
\subsection{Motivating example: linear drift function}
As an illustration for the \emph{time-update} method, we consider the following Fokker-Planck equation with a linear drift function:
\begin{equation}
\label{eqn:linear_diff_example}
\frac{d u}{dt} =  \nabla \cdot \bK \nabla u - \nabla \cdot (\mathbf{Jx} u)
\end{equation}
with parameters
\begin{equation*}
\begin{matrix}
\bK = \begin{bmatrix}
\frac{1}{2}&\frac{1}{4}\\
\frac{1}{4}&\frac{3}{2}
\end{bmatrix}&
\mathbf{J} = 
\begin{bmatrix}
0&0.1\\
0&0
\end{bmatrix}.
\quad

\end{matrix}
\end{equation*}
We consider the solution of the initial value problem with initial condition
\begin{equation}
u(\bx,0) = \frac{1}{\sqrt{(2 \pi)^d \det(\bSigma_0)}} \exp(-\frac{\bx^T \bSigma_0^{-1}\bx}{2}),
\end{equation}
where the initial covariance is given by
\begin{equation}
\bSigma_0 = \begin{bmatrix}
2&1\\
1&2
\end{bmatrix}.
\end{equation}

In Sec.~\ref{sec:derive}, we proved that propagating level sets by (\ref{eqn:level_set_ODE}), and consequently (\ref{eqn:level_set_ODE_num_matform}) exactly solves~(\ref{eqn:linear_diff_example}). 
To give a concrete numerical example of this property, we check the convergence of numerical ODE solvers for the initial value problem and find the error of the density function at $t=10$ with ODE solvers of different order. We factor $\bSigma_0 = \bM_0 \bM_0^T$, and set the ODE (\ref{eqn:level_set_ODE_num_matform}) with the initial condition $\bM(0) := \bM_0$. To verify that our method is accurate for the linear Fokker-Planck equation (\ref{eqn:linear_diff_example}), we check that when using different numerical ODE solvers, the solution converges to the same value, with the rate of convergence coinciding with the order of the ODE solver.

\begin{figure}
\includegraphics[width = 0.5\textwidth]{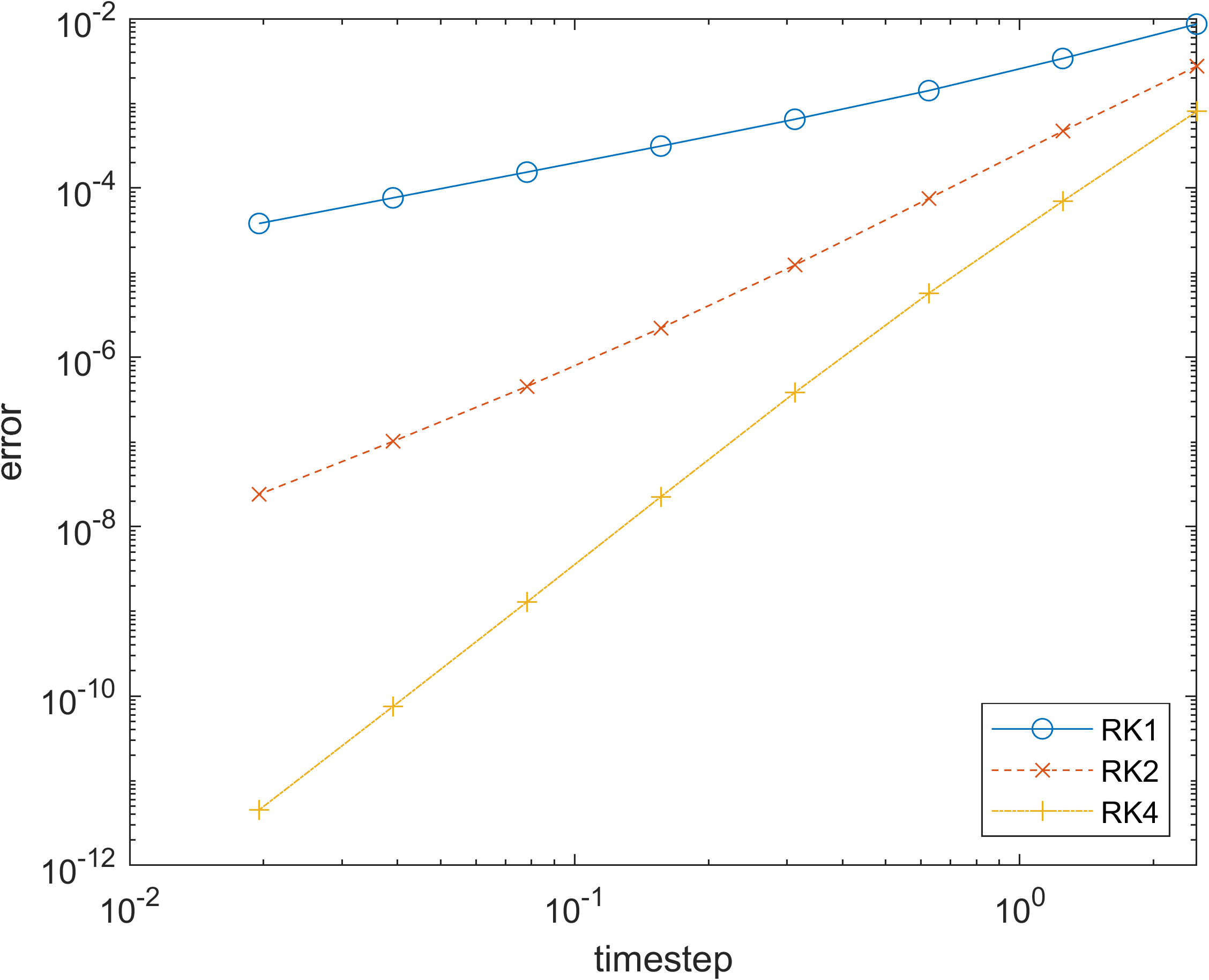}
\caption{log-log graph of the error of covariance matrix in the infinity norm with linear Fokker-Planck equation. This shows that our method preserves the order of accuracy of the ODE solver.}
\label{fig:linear_conv}
\end{figure}
In Fig. \ref{fig:linear_conv}, we verify that for the Runge-Kutta methods of order $1$, $2$, and $4$, the error of $u(\cdot,10)$ measured in infinity norm converges at the same order of the ODE solvers, which is expected if the reformulation (\ref{eqn:level_set_ODE_num_matform}) is exact for (\ref{eqn:linear_diff_example}). 

\subsection{The averaged velocity level set time-update}
Here we state the \textbf{averaged velocity level set time-update} method, which uses central difference instead of forward difference, and shows better accuracy in numerical experiments (See Appendix A for an example) when compared versus the version in Sec. \ref{sec:level_set_plain}.

We set the velocity of the mean $(d \bar{\bx})/dt$ by the \textbf{averaged velocity}:
\begin{equation}
\label{eqn:center_ODE_avg}
\frac{d \bar{\bx}}{dt} = \bv_a(\bar{\bx},\bM) := \frac{1}{2d} \sum_{i=1}^d\left(\bv(\bar{\bx} + \bx_i) + \bv(\bar{\bx} -\bx_i)\right).
\end{equation} (Recall that $\bx_i$ are columns of the matrix $\bM$)

and the velocity of the matrix $\bM$:
 \begin{equation}
 \label{eqn:level_set_ODE_num_matform_avg}
\frac{d \bM}{dt} = \mathbf{\bv}(\bar{\bx}+\bM) - \mathbf{v_a} + \frac{1}{2}\bK(\bM^{T})^{-1}.
\end{equation}
It can be easily seen that when the drift velocity field $\bv$ is linear in space, equations (\ref{eqn:level_set_ODE_num_matform_avg}) and (\ref{eqn:level_set_ODE_num_matform}) are identical, and  $\bv_a(\bar{\bx},\bM)  = \bv(\bar{\bx})$. 

Using this averaged velocity, here we summarize the LSKF:

\begin{algorithmic}[1]
\REQUIRE{Guess of initial state $\hat{\bx}_0$ at time $t_0$, and a factorization of a guess of covariance matrix $\hat{\bM}_0$, measurements $\by_1 \dots,\by_n$ at time $t_1,\dots,t_n$.

Drift velocity $\bv$, continuous process noise matrix $\bK$, measurement function $h$, a factorization of the covariance matrix $\mathbf{R}$ of a zero-mean Gaussian measurement noise.}
\FOR{j=1,\dots,n}
\STATE{Set the problem of $\bar{\bx}(t)$ and $\bM(t)$ given by:
\begin{align}
\frac{d\bar{\bx}}{dt} &= \frac{1}{2d} \sum_{i=1}^d(\bv(\bar{\bx} + \bx_i) + \bv(\bar{\bx} -\bx_i))\\
\frac{d \bM}{dt} &= \mathbf{\bv}(\bar{\bx}+\bM) - \mathbf{v_a} + \frac{1}{2}\bK(\bM^{T})^{-1},
\end{align}
with initial condition $\bar{\bx}(t_{j-1}) = \hat{\bx}_{j-1}$, and $\mathbf{M}(t_{j-1}) = \hat\bM_{j-1}$
(Recall: $\mathbf{v_a}$ is defined in (\ref{eqn:center_ODE_avg}), also recall that $\bx_i$ are columns of $\bM$.)
 }
 \STATE{\emph{Time-update}: solve the above equation from $t_{j-1}$ to $t_j$ using a numerical ODE solver, and approximate $\bar{\bx}_j = \bar{\bx}(t_j)$, $\bM_j = \bM(t_j)$ with the numerical solution.}
\STATE{\emph{Measurement-update}: Find the corrected mean $\hat{\bx}_j$ and corrected covariance matrix $\hat{\bM}_j$ at time $t_j$ by applying the measurement-update algorithm defined in subsection \ref{ssec:sqcdckf}, with input $\bar{\bx}_j$, $\bM_j$, and $\mathbf{R}$.}
\ENDFOR

\RETURN{Predicted corrected state $\hat{\bx}_1,\hat{\bx}_n$, at $t_1,\dots,t_n$, with a factorization of the predicted corrected covariance matrix $\hat{\bM}_1,\dots,\hat{\bM}_n$.}
\end{algorithmic}
\subsection{Comparing convergence: achieving beyond IT-1.5 without explicit higher derivatives}
In Sec. \ref{ssec:CD-CKF}, we introduced the CD-CKF with IT-1.5. Here, we compare the convergence rate of the time-update of the CD-CKF (as implemented in~\cite{arasaratnam2010cubature}, and with proper IT-1.5) with that of the LSKF.

Consider a simple harmonic oscillator:
\begin{equation}
\bx(t) := [\epsilon(t)\quad\dot{\epsilon}(t)\quad\ddot{\epsilon}(t)]^T,
\end{equation}
where $\epsilon$, $\dot{\epsilon}$, and $\ddot{\epsilon}$ are the position, velocity, and acceleration of the oscillator. Its time derivative is given by
\begin{equation}
\bv(\bx) = [\dot{\epsilon}\quad\ddot{\epsilon}\quad-\epsilon]^T.
\end{equation}
This oscillator is also subject to a continuous process noise, defined by the diagonal diffusion matrix:
\begin{equation}
\mathbf{K} = \text{diag}[0.01^2\quad0.01^2\quad0.02^2].
\end{equation}
Since the dynamics are linear, the Gaussian is preserved, and we expect the result from the LSKF and the proper IT-1.5 to converge to the exact solution.

To find the order of convergence, and compare the methods, we consider the following initial condition problem. Given initial condition
\begin{equation}
\bx(0) = [1\quad0\quad0]^T
\end{equation} and initial covariance matrix
\begin{equation}
\mathbf{\Sigma}(0) = \text{diag}[0.01^2\quad0.01^2\quad0.03^2],
\end{equation}
we would like to find the end state at $t=0.2$ using the above mentioned methods. By subdividing the timesteps, we arrive at the following convergence result:
\begin{figure}
\includegraphics[width=0.5 \textwidth]{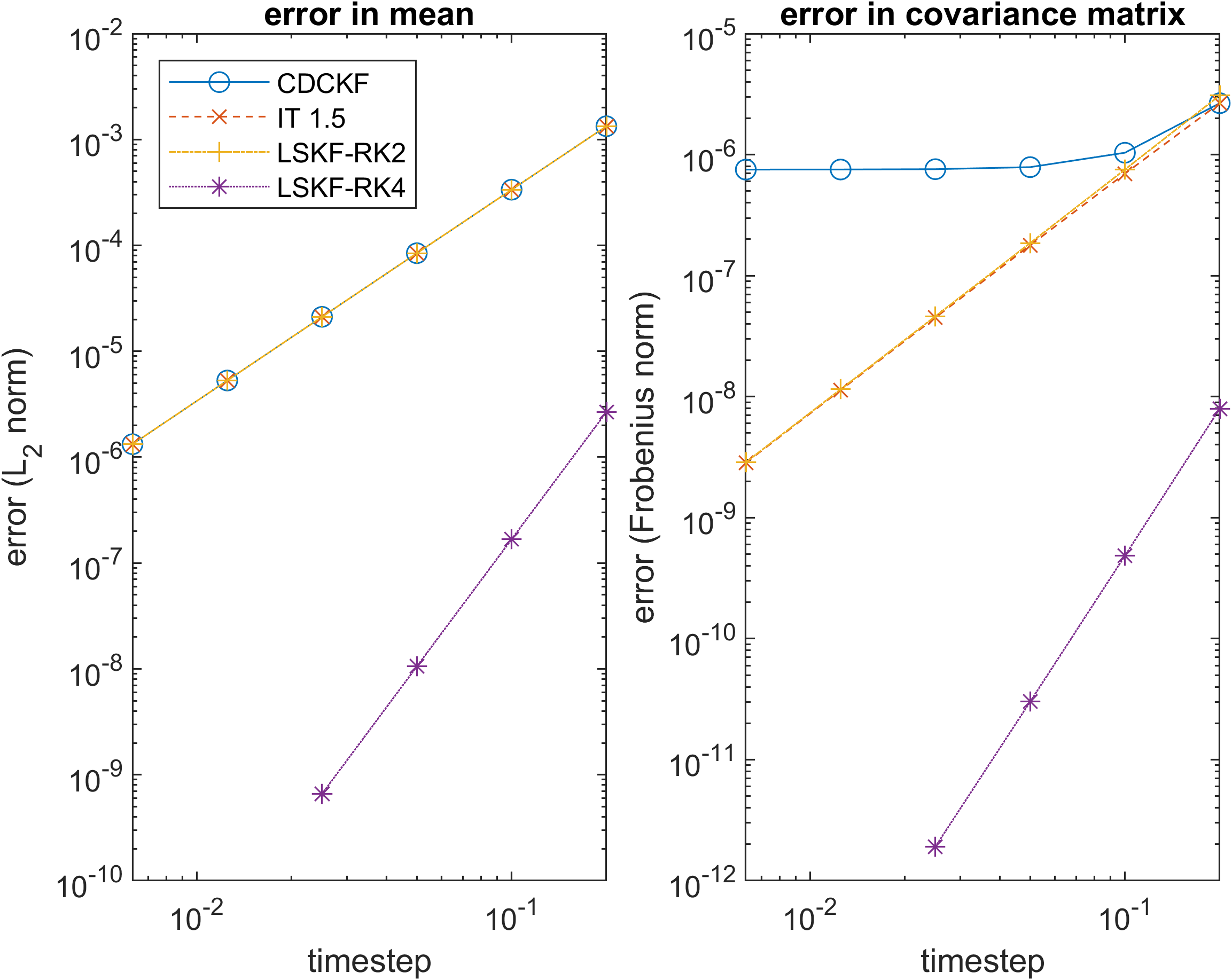}
\caption{convergence of the mean $\bx$ and covariance matrix with the CD-CKF and the LSKF. The left panel shows the error (measured in $L_2$ norm) in mean value as a function of the timestep, whereas the right panel shows error (measured in Frobenius norm) in the covariance matrix. In each panel, CD-CKF~(blue) denotes the time-update implemented in~\cite{arasaratnam2010cubature}, IT-1.5~(red) denotes the CD-CKF with the proper IT-1.5 expansion, LSKF-RK2~(yellow), and LSKF-RK4~(purple) denotes time-update of the LSKF with Runge-Kutta solvers of order $2$ and $4$ respectively. The Runge-Kutta 4 version is cut short due to finite precision linear algebra. }
\label{fig:linear_compare}
\end{figure}
Similar to Fig.~\ref{fig:linear_conv}, Fig.~\ref{fig:linear_compare} shows that the time-update of the LSKF converges the same order as the underlying ODE solver. Importantly, note the proper IT-1.5 and the LSKF-RK2 converge to the same limit mean and covariance matrix at a weak order of convergence 2, validating the correctness of both methods. (Note: they \textbf{do not} converge to the same square root of the covariance matrix, which is not surprising given that the matrix square roots are not unique.) The time-update of the CD-CKF as implemented in~\cite{arasaratnam2010cubature} does not converge to the same limit. Comparing the LSKF-RK4 versus the proper IT-1.5, it can be noted that much fewer timestep subdivisions can achieve similar truncation errors. As pointed out in~\cite{sarkka2012continuous}, the IT-1.5 already requires explicit first and second derivatives of $\bv(\bx)$, and any higher-order Ito-Taylor expansion is necessarily more complicated. On the contrary, the time-update of the LSKF, as defined in (\ref{eqn:level_set_ODE_num_matform_avg}) does not require the explicit expression of the derivatives of $\bv(\bx)$, and can achieve a higher order of convergence with the freedom to choose any ODE solver.
\section{Numerical example: the radar tracking coordinated turn test case}
\subsection{Problem description}

Here, we follow the test case presented in~\cite{arasaratnam2010cubature}, considering the scenario where a radar station tracks an aircraft making a coordinated turn. Since the CD-CKF in~\cite{arasaratnam2010cubature} is claimed to be \emph{the choice for challenging radar problems}, we compare LSKF against CD-CKF in the most challenging scenario they considered with $\omega = 6^\circ/s$ with sampling intervals $T = 2s$, $T=4s$, and $T = 6s$. (Note: conversion to radians per second is required) Additionally, we consider the more challenging scenario with $\omega = 12^\circ/s$ and $\omega = 24^\circ/s$. We implemented the CD-CKF based on the square root form formulated in~\cite{arasaratnam2010cubature}, using their implementation presented at~\cite{arasaratnamCKFcode}. The details of the test case are as follows:

The aircraft is described by a $7$-dimensional state vector 
\begin{equation}
\mathbf{x}(t) := [\epsilon(t) \quad \dot{\epsilon}(t) \quad\eta(t) \quad \dot{\eta}(t)\quad \zeta(t) \quad \dot{\zeta}(t) \quad \omega(t)]^T,
\end{equation}
where $\epsilon(t),\eta(t),\zeta(t)$ describes the position, in meters, $\dot{\epsilon}(t), \dot{\eta}(t), \dot{\zeta}(t)$ describes the velocity of the aircraft, in meters per second, and the $\omega(t)$ describes the turn rate of the aircraft, in \textbf{radians} per second. The dynamics of the aircraft are defined by the following drift equation:
\begin{equation}
\mathbf{v}(\mathbf{x}(t)) = [\dot{\epsilon} \quad -\omega \dot{\eta} \quad \dot{\eta} \quad \omega \dot{\epsilon} \quad \dot{\zeta} \quad 0 \quad 0]^T.
\end{equation}
The noise term is defined by the following diagonal diffusion matrix:
\begin{equation}
\mathbf{K} = \text{diag}([0\quad \sigma_1^2 \quad 0 \quad \sigma_1^2 \quad 0 \quad \sigma_1^2 \quad \sigma_2^2]),
\end{equation}
where $\sigma_1 = \sqrt{0.2}$, and $\sigma_2 = 7\times10^{-4}$. (Note: in~\cite{arasaratnam2010cubature}, they suggested $\sigma_2 = 7\times10^{-3}$. However, the accompanied code provided by Arasaratnam on his webpage~\cite{arasaratnamCKFcode} used the parameter $\sigma_2^2 = 5\times10^{-7}$, which matches closely with $\sigma_2 = 7\times10^{-4}$. Our calculated RMSE also turns to be similar as shown in Fig. 2,3 and 4 in~\cite{arasaratnam2010cubature} if $7\times10^{-4}$ is chosen, whereas $\sigma_2 = 7\times10^{-3}$ does not give similar results.)

The measurement is from a single radar station located at $\mathbf{s} = [1500\quad10\quad0]$. The radar station measures the distance $r$, azimuth angle $\theta$ and elevation angle $\phi$ relative to the radar station. The measurement function is therefore given by:
\begin{equation}
\begin{bmatrix}
r\\
\theta\\
\phi
\end{bmatrix}=
\begin{bmatrix}
\sqrt{(\epsilon-1500)^2+(\eta-10)^2 + \zeta^2}\\
\arctan(\frac{\eta-10}{\epsilon-1500})\\
\arctan(\frac{\zeta}{\sqrt{(\epsilon-1500)^2+(\eta-10)^2}})
\end{bmatrix} + \mathbf{w}.
\end{equation}
where the measurement noise $\mathbf{\tau} \sim \mathcal{N}(0,\mathbf{R})$, with measurement noise matrix $\mathbf{R} = \text{diag}([\sigma_r^2,\sigma_\theta^2,\sigma_\phi^2])$, where $\sigma_r = 50, \sigma_\theta = 0.1^\circ, \sigma_\phi = 0.1^\circ$ (Note: the standard deviations $\sigma_\theta$ and $\sigma_\phi$ are measured in \textbf{degrees}, and a unit conversion is needed).

For the test scenario, the aircraft starts with the initial state 
\begin{equation}
\mathbf{x}_0 = [1000\quad 0 \quad 2650\quad 150 \quad 200\quad 0\quad \omega_0 ]^T,
\end{equation}
where $\omega_0$ is the initial turn rate, and the measurement is taken with a constant time interval $T$. The total time for simulation is chosen to be $120$ seconds. The turn rate and measurement interval vary across test cases to examine the performance of the filters. The initial covariance is taken as
\begin{equation}
\mathbf{\Sigma} = \text{diag}([100\quad 1\quad 100 \quad1\quad 100 \quad1\quad 0.01]),
\end{equation}
based on a physically realistic assumption: from an observer on the ground, one would have a reasonably good guess about its position with standard deviation $\sigma = 10$ meters, and a good guess about its velocity through differentiation with $\sigma = 1$ meters per second, but a rather bad guess for the turn rate with $\sigma = 0.1$ radian per second, or approximately $5.73$ degrees per second.

\subsection{Numerical results}
\label{ssec:numerical_over}
With the problem description complete, we now turn to present our numerical results. $N =100$ experiments are executed for each set of parameters, and the same set of experiments is applied to all candidate filters. The main performance metric used is the Root-mean square error (RMSE) for position, velocity and turn rate. For example,
RMSE for position is defined as:
\begin{equation}
\sqrt{\frac{1}{NK}\sum_{n=1}^{N}\sum_{k=1}^{K}\left((\epsilon_k^n-\hat{\epsilon}_k^n)^2+ (\eta_k^n-\hat{\eta}_k^n)^2+(\zeta_k^n-\hat{\zeta}_k^n)^2\right)},
\end{equation}
 where $N$ is the number of experiments, and $K$ is the number of measurements in each experiment. 
 
Another metric we consider is the number of divergent results, which we define as any result that has an error larger than $500$ or ends prematurely due to a \texttt{not a number} error. 
Following~\cite{arasaratnam2010cubature}, we evaluate the performance of each method at different subdivisions, $m$, of the timestep between measurements. Since our method is defined purely as a reformulated ODE, the subdivision of the timestep is the same as a timestep in a fixed timestep ODE solver, such as the widely-used Runge-Kutta 4 method. Additionally, to verify our claim that the choice of timestep subdivision can be completely passed to the ODE solver, we also use an adaptive solver, \texttt{ode113}, which is integrated into the \texttt{MATLAB} software package. In this sense, we introduced $2$ implementations of the LSKF, which we call the \emph{LKSF-RK4} and \emph{LSKF-adaptive} respectively. In the following examples, we will verify that the additional subdivisions do not affect the numerical results from the LSKF-adaptive. 
\begin{figure*}[t]

\includegraphics[width=\textwidth]{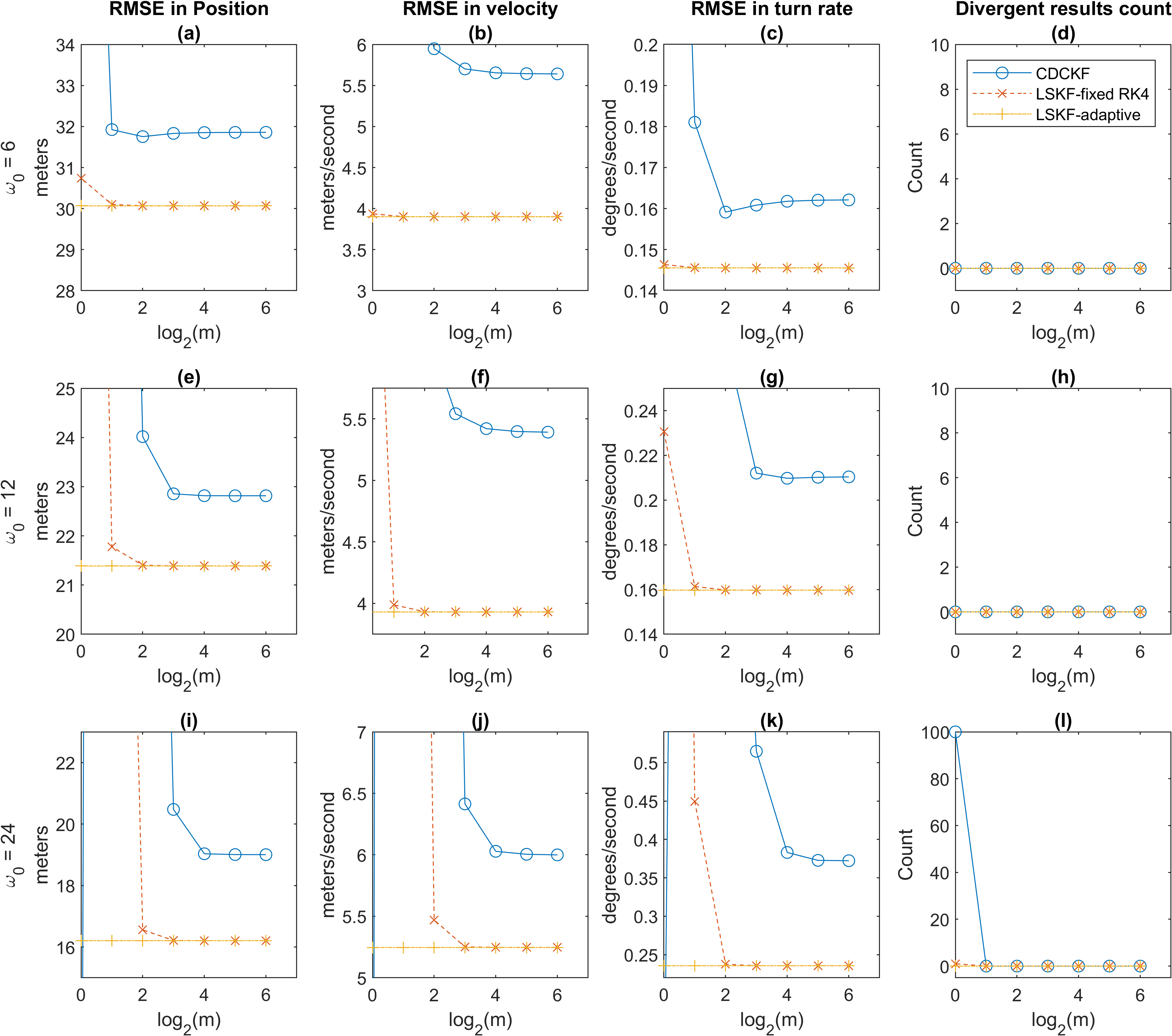}
\caption{RMSE and count of divergence results, for a fixed measurement interval $T = 6s$ varying $m$ and $\omega_0$, where $m$ is the number of timestep subdivisions between measurements, and $\omega_0$ is the initial turn rate. Each row gives the performance metrics for the same initial turn rate, whereas each column contrasts the same performance measurement across different initual turn rates. Note for $\omega_0=24^\circ/s$ and $m=1$, all results from the CD-CKF are divergent.}
\label{fig:RMSE6sec}
\end{figure*}

Our numerical results in Fig. \ref{fig:RMSE6sec} show that our methods consistently outperform the CD-CKF in this test case, across all choices of angular velocity and timestep subdivisions. Importantly, note that the performance of the CD-CKF cannot match that of the LSKF-adaptive even if sufficient timestep subdivisions are introduced. We suspect this is due to the fact that the CD-CKF as introduced in~\cite{arasaratnam2010cubature} only uses the IT-1.5 expansion at the beginning of each \emph{time-update} step but not at the subdivided timesteps, whereas our method is defined using instantaneous information and is not subject to this limitation. 

Equally importantly, note that the LSKF-adaptive version of our method gives the same result independent of the subdivisions introduced. Additionally, the fixed-timestep LSKF-RK4 converges to the LSKF-adaptive result, as expected for a consistent ODE solver. In practice, a user can always use the LSKF-adaptive version with the choice of adaptive ODE solvers that gives the best performance without needing to consider timestep subdivisions manually.
\begin{figure*}[t]
\includegraphics[width=\textwidth]{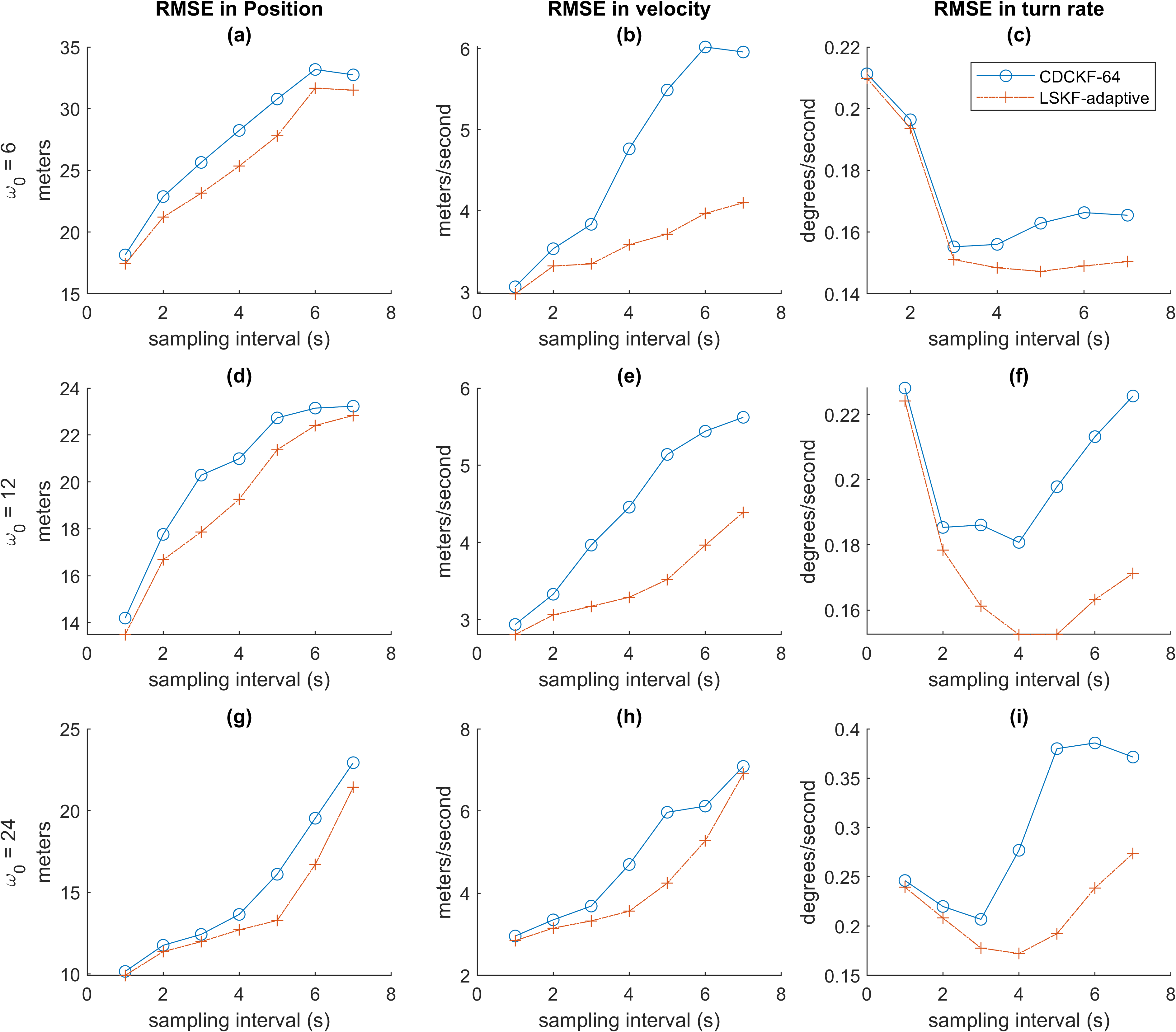}
\caption{RMSE with varying the measurement interval from $1$ through $7$ seconds, and varying the initial turn rate $\omega_0$. For the CD-CKF, a timestep subdivision of $m=64$ is chosen to ensure that the CD-CKF is performing optimally. For the LSKF, an adaptive ODE solver is used and no timestep subdivisions are manually inserted. All results from the CD-CKF and the LSKF are convergent.}
\label{fig:RMSE_var_sec}
\end{figure*}
In Fig.~\ref{fig:RMSE_var_sec}, we verify that even with sufficient timestep subdivision for the CD-CKF, the LSKF still outperforms the CD-CKF over all the parameters chosen, even when no intermediate timesteps are manually inserted. Additionally, the difference in performance between the CD-CKF and the LSKF is more significant when the measurement interval $T$ is large, whereas the results are similar when $T=1s$. 
With this in mind, we proceed further into the numerical experiments, using sufficient timestep subdivisions ($m=64$) for the CD-CKF, whereas no additional pre-defined timestep subdivisions ($m=1$) for the LSKF-adaptive, and vary the measurement interval $T$ from $1$ second to $7$ seconds with an increment of $1$ second. 

In conclusion, for the test case picked by~\cite{arasaratnam2010cubature}, our LSKF method consistently outperforms the CD-CKF across the challenging scenarios introduced by them. In addition, our method requires less input from an end-user, as our method only requires knowledge of the \emph{drift function} explicitly, whereas the CD-CKF also requires the first and second spatial derivatives of the \emph{drift function}, as well as a user-defined timestep subdivision parameter $m$. Finally, the elegance of reforming the system as an ODE without introducing expansion in time gives more room for possible future improvement.



\section{Conclusion}
In this paper, we derived a novel Level Set Kalman Filter method for nonlinear continuous-discrete systems. From a theory standpoint, our derivation is based on the movement of a level set instead of the moments of a distribution. Our method reformulates the \emph{time-update} of the filtering as an ODE. As a consequence of this formulation, our description is instantaneous, in contrast to existing methods that use some expansion in time to approximate the continuous process noise. 

From a practical point, for the radar tracking coordinated turn test case, our method consistently outperforms the CD-CKF over a range of challenging scenarios. Additionally, our method requires less explicit information about the model, and our instantaneous formulation allows a user to easily pass the task of choosing a timestep to the well-established field of adaptive ODE solvers. The numerical results indicate that our method is a good candidate for challenging tracking problems, especially if an appropriate timestep cannot be determined \emph{a priori}.

\section*{Acknowledgment}
The authors acknowledge support from National Science Foundation through grant NSF DMS-1714094. D. B. Forger is the CSO and holds equity in Arcascope. We especially thank the anonymous reviewers and the associated editor, whose advice helped us improve the numerical stability of our method, in addition to the general improvement in the structure of the manuscript. 
\section*{Appendix A: Comparison of the standard, averaged and partially averaged time-update of the LSKF}
In equations (\ref{eqn:level_set_ODE_num_matform}) and (\ref{eqn:level_set_ODE_num_matform_avg}), we defined the  \textbf{(standard)} \emph{time-update} and the \textbf{averaged velocity} \emph{time-update} equations. Here we use a numerical example to illustrate the difference in behavior between these methods. Additionally, we introduce the \textbf{partially averaged velocity} \emph{time-update} equations as a trade-off option between the standard and the averaged velocity version. 

Using the same notations as in (\ref{eqn:level_set_ODE_num_matform}) and (\ref{eqn:level_set_ODE_num_matform_avg}), the \textbf{partially averaged velocity} is defined as
\begin{equation}
\mathbf{v_p}:= \frac{1}{2d} \left(\sum_{i=1}^d(\bv(\bx + \bx_i)) + d \times \bv(\bx -\bx_1)\right).
\end{equation}
Correspondingly, in matrix form, the \emph{time-update} ODE using partially averaged velocity is defined by
 \begin{equation}
 \label{eqn:level_set_ODE_num_matform_pavg}
\frac{d \bM}{dt} = \mathbf{\bv}(\bx+\bM) - \mathbf{v_a} + \frac{1}{2}\bK(\bM^{T})^{-1}.
\end{equation}
As an illustrative example to show the effects of using an \emph{averaged velocity} or \emph{partially averaged velocity} versus evaluating the velocity at the center when a nonlinear drift velocity is present, we consider the following system: 
\begin{equation*}
\frac{du}{dt} = -\nabla \cdot (\bv u),
\end{equation*}
where $a,b$ are positive constants, and
\begin{align*}
    \bv(x,y,0) &= (0,x^2)\\
    u(x,y,0) &= \frac{1}{2 \pi ab} \exp(-\frac{\frac{x^2}{a^2}+\frac{y^2}{b^2}}{2}).
\end{align*}
(Note $x,y$ in the following equation are not in bold, and are scalars)
The analytic solution to this transport equation is given by 
\begin{equation*}
    u(x,y,0) = \frac{1}{2 \pi ab} \exp(-\frac{\frac{x^2}{a^2}+\frac{(y-xt)^2}{b^2}}{2}).
\end{equation*}
To compare the performance of the three LSKF methods, we compute the averaged $\mathbf{L^2}$ error of the results with the analytical result, with randomly chosen matrix square root $\mathbf{M}\mathbf{M}^T = \bSigma$. 
\begin{figure*}[!t]
        \centering
             \subfloat[]{\includegraphics[width=0.3\textwidth]{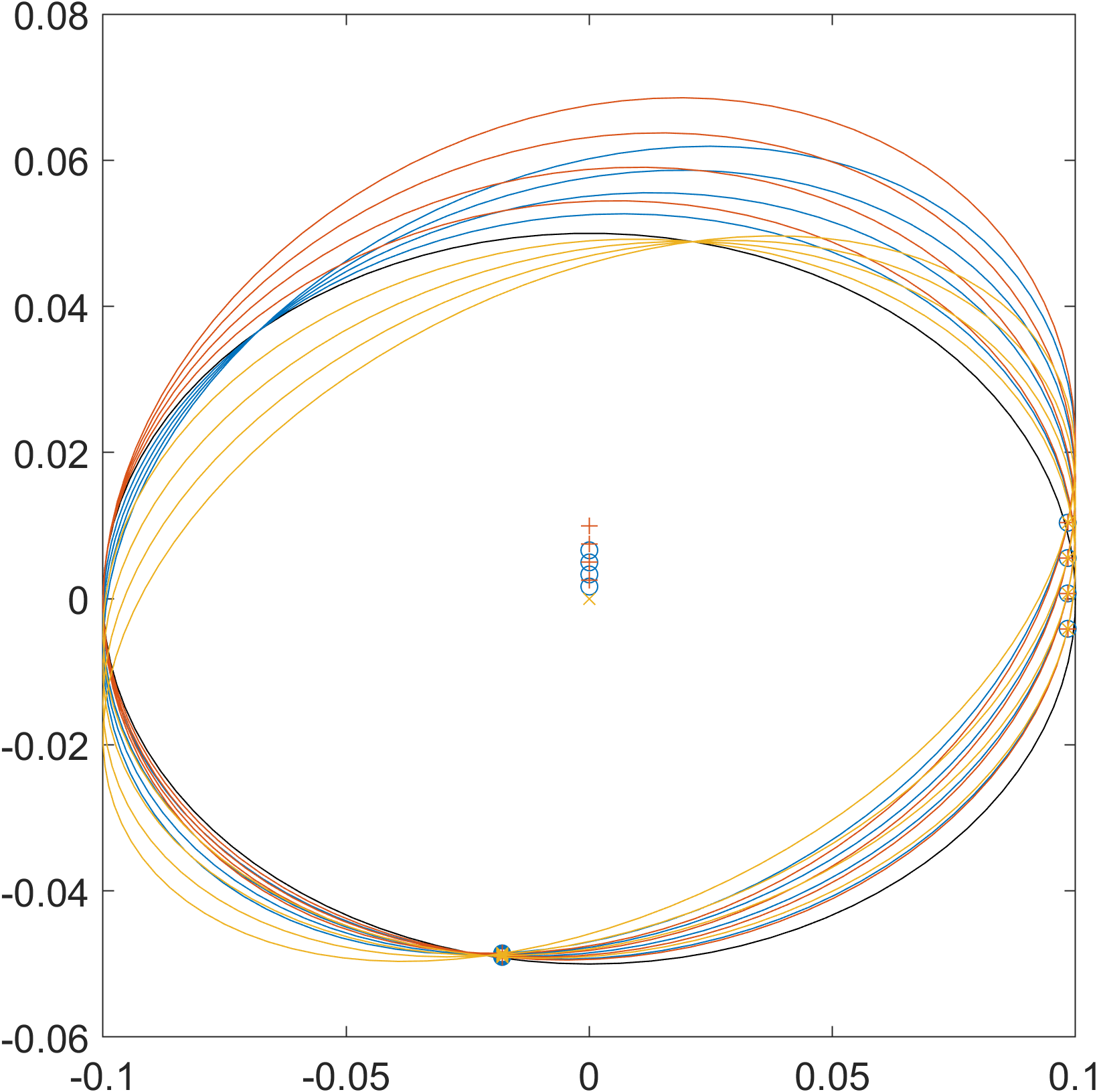}}  
\hfil
             \subfloat[]{\includegraphics[width=0.3\textwidth]{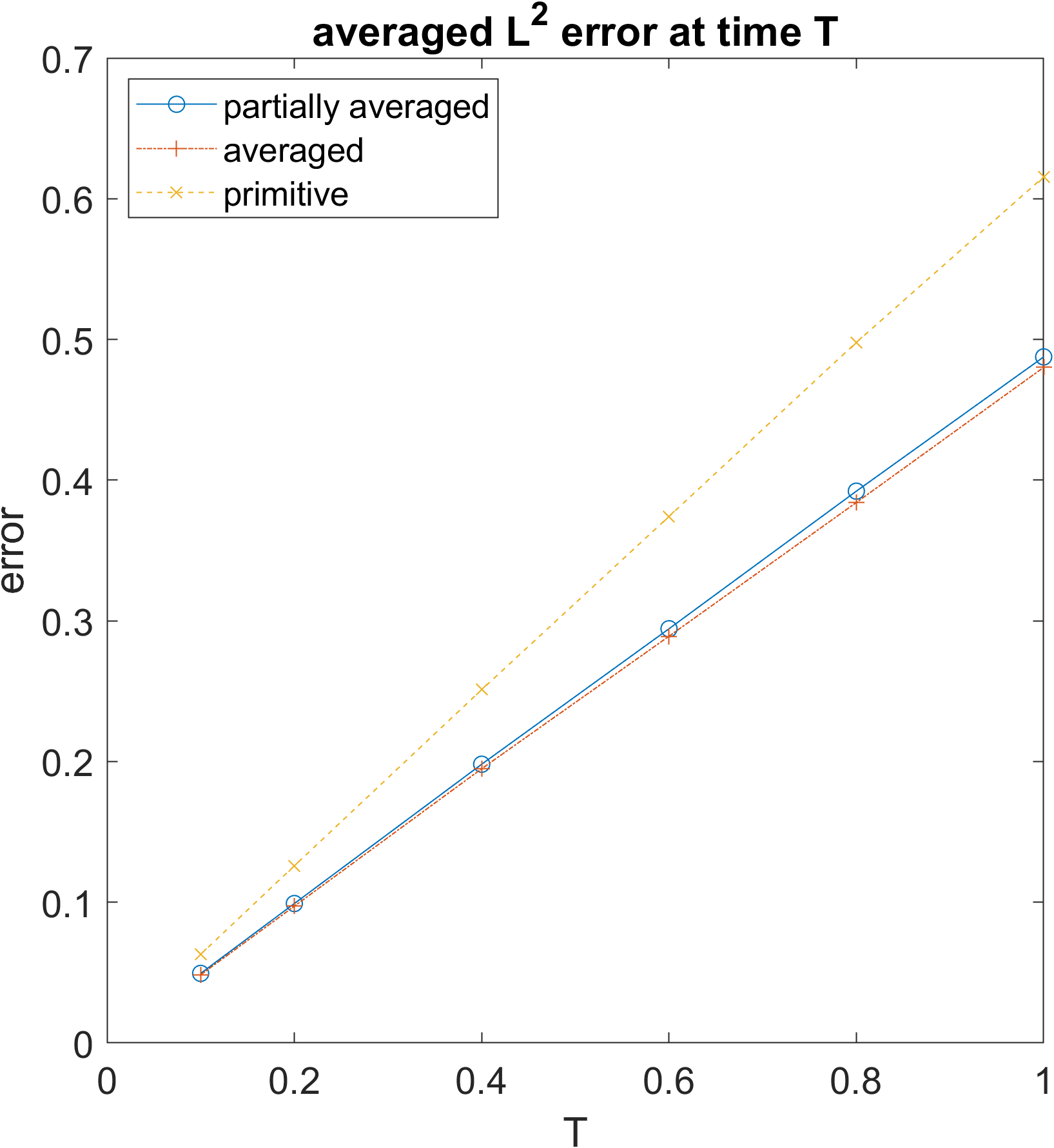}}  
             \hfil
             \subfloat[]{\includegraphics[width=0.3\textwidth]{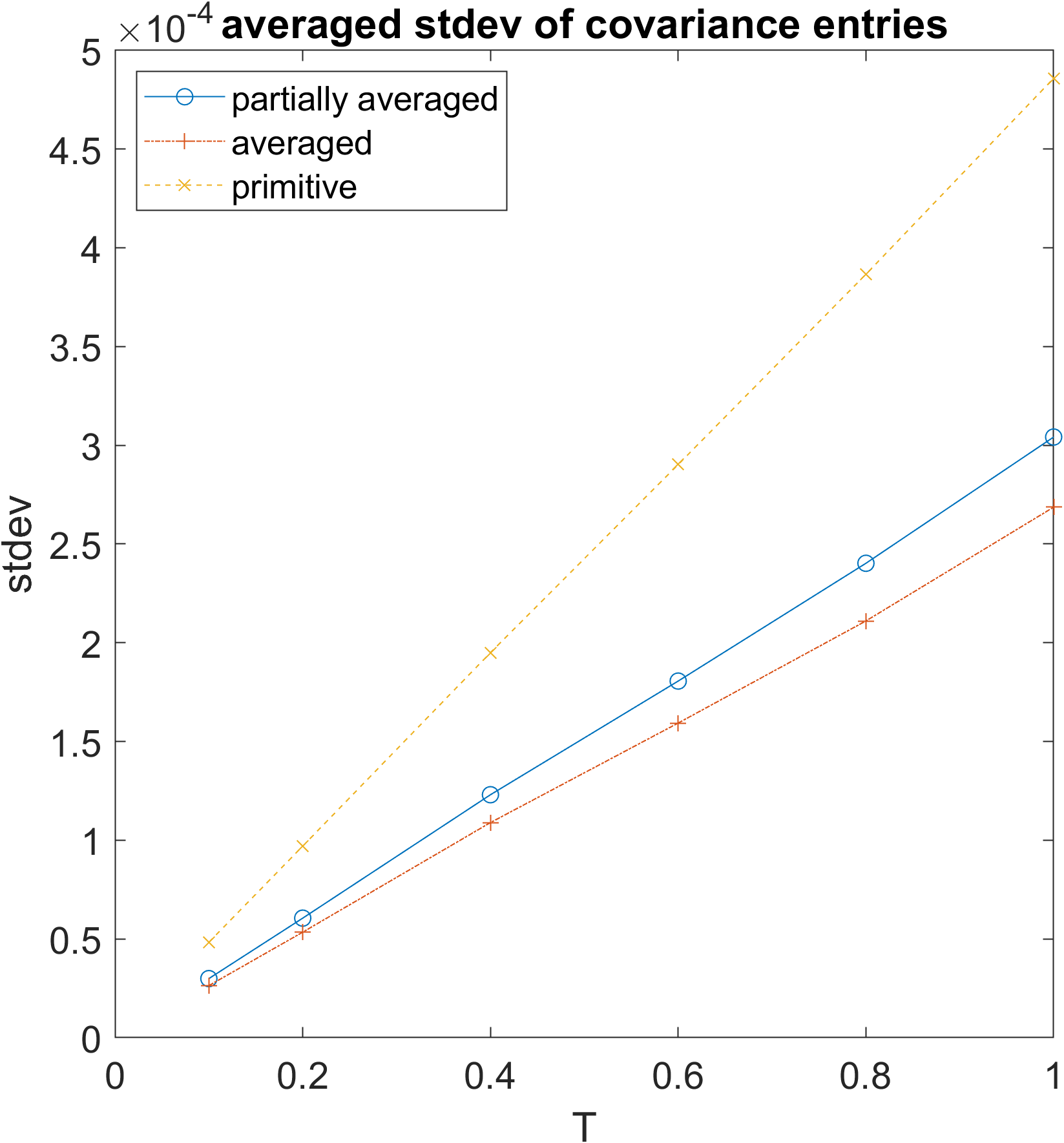}}
        
        \caption{A comparison between the partially averaged (blue), averaged (red), and primitive (yellow) LSKF applied to a nonlinear system. (a) shows the trajectories of the center and level set. (b) shows the averaged $\mathbf{L^2}$-norm of error over $1024$ trials. (c) shows the averaged standard deviation of the covariance matrix.}
        \label{fig:pavg_avg_prim}
    \end{figure*}
As can be observed from Fig.~\ref{fig:pavg_avg_prim}, the partially averaged velocity and averaged velocity version has less RMSE than the standard method. The averaged velocity method is less sensitive than the partially averaged method in that the covariance entries are less dependent on the choice of matrix square root. However, since the partially averaged method requires $d+1$ evaluations of the drift velocity whereas averaged method requires $2d$ evaluations, the partially averaged method is still useful as it can be considered as an efficient improvement from the standard version.

\section*{Appendix B: Computational cost of the LSKF}
A count of FLOPs would be misleading for this method, as the main function reformulates the function as an ordinary differential equation, and the number of steps used is highly dependent on the choice of the numerical ODE solver and the numerical properties of the problem when an adaptive solver is used. When using an ODE solver, most of the computational cost is associated with evaluating the derivative. Therefore, we find the number of evaluation of the \textbf{drift velocity} $\bv$, and FLOPs needed for a single derivative evaluation in (\ref{eqn:level_set_ODE_num_matform_avg}). The computation cost for the \emph{time-update} is then mainly decided by the number of derivative evaluations needed and the length of the timestep. The \emph{measurement-update} is identical to that of the CD-CKF, which is listed in Table V of~\cite{arasaratnam2010cubature} and is omitted here. 

Computations needed for (\ref{eqn:level_set_ODE_num_matform_avg}) are:
\begin{enumerate}
\item
Evaluate the \textbf{averaged velocity}:
\begin{equation}
\label{eqn:center_ODE_avg}
\frac{d \bar{\bx}}{dt} = \bv_a(\bar{\bx},\bM) := \frac{1}{2d} \sum_{i=1}^d(\bv(\bar{\bx} + \bx_i) + \bv(\bar{\bx} -\bx_i)).
\end{equation} $2d$ \textbf{drift velocity} evaluations and $O(d^2)$ FLOPs. 
\item
Find $\mathbf{K}(\mathbf{M}^T)^{-1}$: if solved by LU factorization, forward substitution, and back substitution: $\frac{8}{3}d^3 + O(d^2)$ FLOPs.
\item
Evaluate (\ref{eqn:level_set_ODE_num_matform_avg}): $O(d^2)$ FLOPs.
\end{enumerate}
In total, $2d$ \textbf{drift velocity} evaluations and $\frac{8}{3}d^3+O(d^2)$ FLOPs are needed for evaluating (\ref{eqn:level_set_ODE_num_matform_avg}). When using a fixed Runge-Kutta 4 method, $4$ such evaluations are needed per timestep, which results in $8d$ \textbf{drift velocity} evaluations and $\frac{32}{3}d^3 + O(d^2)$ FLOPs per \emph{time-update}. If the measurement interval is short, then a fixed Runge-Kutta 2 method can be used, with half evaluations needed. For most scenarios, we suggest using an adaptive solver to automatically determine the appropriate timestep given a target error bound.
\begin{table}[h]
\caption{List of symbols and notations}
\begin{tabularx}{0.5 \textwidth}{cX}
\hline
$\mathbf{\tau}$&measurement noise\\
$\mathcal{N}(\mu,\mathbf{R})$& multivariate normal distribution with mean $\mu$ and covariance $\mathbf{R}$\\
$\mathbf{R}$&covariance of the measurement noise\\
$d$&dimension of the state vector\\
$\mathbf{x}$&state vector\\
$\mathbf{y}$&measurement vector\\
$h(\cdot)$&measurement function\\
$\mathbf{v}$&the (drift) velocity of the system dynamics\\
$\Delta t$&the timestep taken with the \emph{time-update} step of the CD-CKF and the LSKF\\
$\mathbf{f}_d(\bx,t)$&a function associated with the \emph{time-update} step of the CD-CKF with a timestep of $\Delta t$\\
$\mathbf{K}$&$d\times d$ covariance matrix of the continuous process noise\\
$\sqrt{\mathbf{K}}$&A non-unique matrix square root such that $\mathbf{K} = \sqrt{\mathbf{K}}\sqrt{\mathbf{K}}^T$\\
$u = u(\mathbf{x},t)$&the probability density function in state space\\
$x_i$&$i$th component of the state variable $\mathbf{x}$\\
$\mathbf{x}_0$&the initial condition for the mean of the estimation\\
$\mathbf{\Sigma}_0$&the initial condition for the covariance of the estimation\\
$\mathbf{x}_i$&$i$th cubature or off-center points in the state space ($i=1..2d$)\\
$\mathbf{\Sigma}$&the covariance matrix in the estimation\\
$\mathbf{M}$&a square root factorization of the covariance matrix $\mathbf{\Sigma}$\\
$\mathbf{J}$&Jacobian matrix of the drift velocity field $\mathbf{v}$\\
$\bar{\bx}$&the mean value in an estimation, during \emph{time-update}\\
$\hat{\bx}$&the mean value in an estimation, after \emph{measurement-update}\\
\hline
\end{tabularx}
\label{lst:symbols}
\end{table}
\section*{Appendix C: List of symbols and notations}
All symbols and notations used in more than one locations are listed in Table~\ref{lst:symbols}.


%





\bibliography{ref}
\bibliographystyle{IEEEtran}
\end{document}